\newcommand\complex{{\mathbb C}}
\newcommand\myreal{{\mathbb R}}
\newcommand{\smat}[1]{\ensuremath{\left[ \begin{smallmatrix}#1\end{smallmatrix} \right]}}
\newcommand{\bmat}[1]{\ensuremath{\begin{bmatrix}#1\end{bmatrix}}}
\newcommand{\A}{\ensuremath{\mathcal{A}}}
\newcommand{\Geq}{\ensuremath{\succeq}}
\newcommand{\Gst}{\ensuremath{\succ}}
\newcommand{\Leq}{\ensuremath{\preceq}}
\newcommand{\Lst}{\ensuremath{\prec}}
\newcommand{\He}[1]{\operatorname{He}\left( #1 \right)}
\renewcommand{\Re}{\operatorname{Re}}
\renewcommand{\Im}{\operatorname{Im}}
\DeclareMathOperator{\e}{e}
\newcommand{\param }[1]{\mathbf{\color{Orange}#1}}
\newcommand{\paramA}[1]{\mathbf{\color{ProcessBlue}#1}}
\newcommand{\paramB}[1]{\mathbf{\color{Plum}#1}}
\newtheorem{definition}{Definition}
\newtheorem{proposition}{Proposition}
\newtheorem{remark}{Remark}
\title{\LARGE \bf
Controlling {identical} linear multi-agent systems over directed graphs
}
\author{
   Nicola Zaupa,
   Luca Zaccarian, 
   Sophie Tarbouriech,
   Isabelle Queinnec,
   Giulia Giordano
   \thanks{
      This work was supported in part by ANR through Grant HANDY under Grant ANR-18-CE40-0010; and in part by MUR PRIN Grant DOCEAT under Grant 2020RTWES4.
   }
   \thanks{N. Zaupa, I. Queinnec and S. Tarbouriech are with LAAS-CNRS, University of Toulouse, CNRS, 31400 Toulouse, France. Email: \texttt{\{nzaupa,queinnec,tarbour\}@laas.fr}}
   \thanks{G. Giordano is with Department of Industrial Engineering, University of Trento, 38123 Trento, Italy. Email: \texttt{giulia.giordano@unitn.it}}
   \thanks{L. Zaccarian is with LAAS-CNRS, University of Toulouse, CNRS, 31400 Toulouse, France and with Department of Industrial Engineering, University of Trento, 38123 Trento, Italy. Email: \texttt{zaccarian@laas.fr}}
}
\begin{document}

\maketitle

\pagestyle{empty}

%%%%%%%%%%%%%%%%%%%%%%%%%%%%%%%%%%%%%%%%%%%%%%%%%%%%%%%%%%%%%%%%%%%%%%%%%%%%%%%%
\begin{abstract}
   We consider the problem of synchronizing a multi-agent system (MAS) composed of several identical
   linear systems connected through a directed graph.
   To design a suitable controller, we construct conditions based on
   Bilinear Matrix Inequalities (BMIs) that ensure state synchronization.
   Since these conditions are non-convex, we propose an iterative algorithm based on a suitable relaxation that allows us to formulate Linear Matrix Inequality (LMI) conditions.
   As a result, the algorithm yields a common static state-feedback matrix for the controller that satisfies general 
   linear performance constraints.
   Our results are achieved under the mild assumption that the graph is time-invariant and connected.
\end{abstract}

%%%%%%%%%%%%%%%%%%%%%%%%%%%%%%%%%%%%%%%%%%%%%%%%%%%%%%%%%%%%%%%%%%%%%%%%%%%%%%%%
\section{Introduction}
   In the last decades, the study of networks, and in particular the distributed control of networked MAS, has attracted a lot of interest in systems and control, due to the broad range of applications in many different areas \cite{olfati-saber2007}, including: power systems, biological systems, sensors network, cooperative control of unmanned aerial vehicles, quality-fair delivery of media content, formation control of mobile robots, and synchronization of oscillators.
   In networked MAS, the general objective is to reach an agreement on a variable of interest.

   We focus our attention on the synchronization problem, where the goal is {to reach} a common trajectory for all agents. 
   In the literature, we can find several studies on scalar agents, but recent works also address networks of agents with finite-dimensional linear input-output dynamics \cite{scardovi2009}.
   In the case of identical agents,
   a common static control law inducing state synchronization
   can be designed by exploiting the information exchange among the agents, which modifies the system dynamics. 
   This exchange is modeled by a (directed or undirected) graph. 
   The spectrum of the Laplacian matrix of this graph plays an important role 
   in the evolution of the associated networked system \cite{MesbahiBook2010}.

   Necessary and sufficient conditions ensuring synchronization have been given under several different forms depending on the context
   \cite{xiao2007,scardovi2009,ma2010,li2010,dalcol2017}.
   A set of necessary and sufficient conditions for identical SISO agents over arbitrary time-invariant graphs is summarized in \cite{interconnected2023}.

   Different approaches
   for control design can be found
   in the literature depending on the desired objective.
   Most of the results are based on the solution of an algebraic Riccati equation, under the assumption that the static control law has a given {infinite gain margin} structure \cite{qin2015,IsidoriBook2017_ch5,BulloBook2022,SaberiBook2022}: the {state-feedback matrix} $K$ has the form $K=B^\top P$, where $B$ is the input matrix and $P$ is the solution of an algebraic Riccati equation. 
   {However,} {imposing} {an infinite gain margin potentially limits} the {achievable} performance.
   As shown in \cite{qin2015}, by choosing a small enough {constant}, {a feedback law can be designed} 
   without knowing the network topology;
   in practice, this constant {depends on the non-zero} eigenvalue of the Laplacian {matrix} having the smallest real part.
   While \cite{IsidoriBook2017_ch5} studies the output feedback case, we consider the dual case, which is also discussed in \cite{BulloBook2022}.
   The design procedure in \cite{trentelman2013} allows achieving  synchronization under bounded $H_\infty$ disturbances, thanks to an observer-based dynamic controller, expressed in terms of suitable algebraic Riccati equations, which guarantee{s} disturbance rejection {properties}.
   {A different} approach base{d} on LMIs is presented in \cite{listmann2015}, where synchronization conditions are imposed {by} {relying on} strong assumptions on the structure of the Lyapunov matrices, while the 
   {problem size is independent of}
   the number of agents. 

   In this work, we study the design of a static state-feedback control law ensuring {MAS} synchronization. 
   The agents are modeled {as} identical LTI {subsystems} and their interconnections are described by time-invariant, directed, and connected graphs.
   We introduce a design strategy based on LMIs, similar to the one in \cite{listmann2015}, 
   but without imposing any assumption on the controller structure or constraints on the Lyapunov matrices, thus ensuring higher degrees of freedom in the design, and potentially improved optimized stabilizers. 
   Through a relaxation of the conditions in \cite{interconnected2023}, we formulate an iterative LMI-based procedure 
   to design a static state-feedback control law.
   Our LMI formulation allows us to easily embed additional
   linear constraints in order to reach a desired performance \cite{BoydBookLMI1994}.

\textbf{Notation.} 
$\myreal$ and $\mathbb{C}$ {denote} the sets of real and complex numbers, respectively. 
We denote with $\jmath$ the imaginary unit.
Given $\lambda = a+\jmath b\in\mathbb{C}$, $\Re(\lambda)=a$ and $\Im(\lambda) = b$ are its real and imaginary parts, respectively;
$\lambda^* = a-\jmath b$ is its complex conjugate. 
$I_N$ is the identity matrix of size $N$,
{while $\mathbf{1}_N \in \mathbb{R}^N$ denotes the $N$ dimensional (column) vector {with} all $1$ entries.}
For any matrix $A$, $A^\top$ denotes the transpose of $A$.
Given two matrices $A$ and $B$, $A\otimes B$ indicates their Kronecker product.
Given a complex matrix ${A} \in \complex^{n\times m}$, ${A}^*$ denotes its conjugate transpose and $\He{A}=A+A^*$. 
Matrix ${A} \in \complex^{n \times n}$ 
is Hermitian if ${A} = {A}^*$, namely $\Re(A)$ is symmetric {$\left(\Re(A)=\Re(A)^\top\right)$} and $\Im(A)$ is skew-symmetric {$\left(\Im(A)=-\Im(A)^\top\right)$}. 
We denote the Euclidean distance of a point $x$ from a set ${\mathcal A}$ as $|x|_{\mathcal A}$.

\section{Problem Statement}
\label{sec:statement}
   Consider $N$ identical dynamical systems
   \begin{subequations} \label{eq:interconnection}
      \begin{equation}\label{eq:sys}
         \dot {x}_i  = A x_i + B u_i
         \quad\quad i=1, \dots, N,
      \end{equation}
      with state vector $x_i \in \mathbb{R}^n$, input vector $u_i \in \mathbb{R}^{m}$, state matrix $A\in\myreal^{n\times n}$ and input matrix $B\in\myreal^{n\times m}$. 
      {Assume} that the pair $(A,B)$ is controllable.
      The directed graph $\mathcal G$ with  {weight} { matrix $\mathcal W\in \mathbb{R}^{N \times N}$} captures the communication topology
       among the agents;
      its Laplacian matrix is $L:={\operatorname{diag}(\mathcal W \mathbf{1}_N)}-\mathcal W$.
      Denote by $0=\lambda_0, \lambda_1, \dots,\lambda_\nu$ the eigenvalues of $L$, ordered with non-decreasing real part 
      (complex conjugate pairs and repeated eigenvalues are only counted once).
      The control input $u_i$ {affecting} agent $i$ is expressed as
      \begin{align}\label{eq:u}
         u_i =  K\sum_{j=1}^{N}{\mathcal W}_{ij}(x_j-x_i)= -K\sum_{j=1}^{N}{L}_{ij}x_j,
      \end{align}
   \end{subequations}
   {where ${\mathcal W}_{ij}$ and $L_{ij}$ are the entries of the {weight} and Laplacian matrices, respectively,}
   and $K\in\myreal^{m\times n}$ is the {state-feedback matrix}. 
   {Each} agent uses only relative information with respect to {the} others, as typically desired in applications.

   By defining the aggregate state vector $x := \left[x_1^{\top} \dots x_N^{\top} \right]^{\top}\in\myreal^{nN}$ 
   and input vector $u := \left[u_1^{\top} \dots u_N^{\top} \right]^{\top}\in\myreal^{mN}$,
   we can write {the} interconnection \eqref{eq:interconnection} as
   \begin{subequations}
      \begin{align}
         \dot x &= (I_N \otimes A) x + (I_N \otimes B) u, \\ \label{eq:feedback}
         u      &= - (L \otimes K) x.
      \end{align}
   \end{subequations}
   The {overall} closed-loop expression is
   \begin{equation}\label{eq:closedloop}
      \dot x = \Bigl( \left(I_N \otimes A\right) - (L \otimes BK) \Bigr) x.
   \end{equation}

   {Our} goal {is to} synthesize a common static control law that enforces synchronization among systems \eqref{eq:sys}. 
   To this aim, we introduce the synchronization set:
   \begin{equation}\label{eq:setA}
      \A: = \begin{Bmatrix} x  : \, x_i - x_j = 0, \, 
      \forall i,j \in  \left \{ 1, \dots , N \right \} \end{Bmatrix}.
   \end{equation}
   We recall the definition of ``$\mu$--synchronization'' from \cite{interconnected2023}.
   \begin{definition}[$\mu$--Synchronization]{
      The attractor $\A$ in \eqref{eq:setA} is $\mu$--UGES (uniformly globally exponentially stable with rate $\mu>0$) for {system} \eqref{eq:interconnection} if there exists $M>0$ such that $|x(t)|_\A \leq M\e^{-\mu t}|x(0)|_\A$ for all $t\geq0$.
      }
   \end{definition}

   {Some of the necessary and sufficient conditions for $\mu$--synchronization in \cite[Theorem~1]{interconnected2023} are here adapted {to deal with a} synthesis {problem}: matrix} 
   $C$ {in \cite{interconnected2023}} is {replaced} by {the} {state-feedback matrix} $K$
   {in} the closed-loop dynamics \eqref{eq:closedloop}.
   Moreover, we can exploit parameter $\mu$ in iterative approaches for optimization-based selections of $K$.
   \begin{proposition}\label{th:synch}
      {Consider the system in \eqref{eq:interconnection} and the attractor 
      $\A$ in (\ref{eq:setA}).
      The synchronization set $\A$ is $\mu$--UGES {if and only if any of the following conditions holds}:}
      \begin{subequations}
         \begin{enumerate}
            % -------- item 1 --------
            \item \label{it:Ak} 
            \textnormal{[\emph{Complex condition}]}
            The complex-valued matrices  
            \begin{equation} \label{eq:Akdef}
            A_k := A - \lambda_k BK,  \quad  k=1, \dots, \nu,
            \end{equation}
            {have spectral abscissa smaller than $-\mu$}.
            \smallskip
            % -------- item 2 --------
            \item \label{it:Akreal} 
            \textnormal{[\emph{Real condition}]}
            The real-valued matrices  
            \begin{equation} \label{eq:Aekdef}
            A_{e,k} := \begin{bmatrix} A - \Re(\lambda_k) BK  & \Im(\lambda_k) BK \\
            -\Im(\lambda_k) BK & A - \Re(\lambda_k) BK 
            \end{bmatrix},
            \end{equation}
            $k=1, \dots, \nu$, {have spectral abscissa smaller than $-\mu$}.
            \smallskip
            % -------- item 3 --------
            \item \label{it:lyapReal} 
            \textnormal{[\emph{Lyapunov inequality}]}
            {For each $k=1, \dots, \nu$,}
            there exist real-valued matrices $P_k = P_k^\top \Gst0$ and $\Pi_k^\top = -\Pi_k$
            such that {matrix} $P_{e,k}  := \smat{P_k  & - \Pi_k \\ \Pi_k & P_k}\Gst0$
           {satisfies}
            \begin{align} \label{eq:Lyap_e}
               \He{P_{e,k} A_{e,k}}  \Lst {-2\mu P_{e,k}}.
            \end{align}
         \end{enumerate}
      \end{subequations}
   \end{proposition}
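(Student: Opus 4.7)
The plan is to read the proposition as an adaptation of \cite[Theorem~1]{interconnected2023}, where the matrix $C$ therein plays the role of the state-feedback gain $K$ here, so the main job is to trace the equivalence $(1)\Leftrightarrow(2)\Leftrightarrow(3)$ with the parameter $\mu$ carried through cleanly. Throughout, I would use that connectedness of $\mathcal G$ implies $\lambda_0=0$ is simple with right eigenvector $\ones$, so $\A$ coincides with $\mathrm{Range}(\ones\otimes I_n)$ and is invariant for \eqref{eq:closedloop}.

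For item \ref{it:Ak}, I would put $L$ in (upper-triangular) Schur form $U^*LU = \operatorname{diag}(0,T)$, with $T\in\complex^{(N-1)\times(N-1)}$ upper triangular carrying the eigenvalues $\lambda_1,\dots,\lambda_\nu$ (with multiplicities) on the diagonal. In the coordinates $\tilde x := (U^*\otimes I_n) x$, the closed-loop dynamics \eqref{eq:closedloop} block-triangularize as $(I\otimes A)-(\operatorname{diag}(0,T)\otimes BK)$, so the transverse dynamics modulo $\A$ is upper block triangular with diagonal blocks $A_k = A-\lambda_k BK$. Hence $|x(t)|_\A \le M\e^{-\mu t}|x(0)|_\A$ for some $M>0$ is equivalent to every $A_k$ having spectral abscissa strictly less than $-\mu$, which is \ref{it:Ak}.

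For \ref{it:Ak}$\Leftrightarrow$\ref{it:Akreal}, I would invoke the standard realification $\complex \hookrightarrow \myreal^{2\times 2}$, $a+\jmath b \mapsto \smat{a & b\\ -b & a}$. Writing $\lambda_k=\Re(\lambda_k)+\jmath\Im(\lambda_k)$, one sees that $A_{e,k}$ in \eqref{eq:Aekdef} is exactly the realification of $A_k$. Since this embedding is a unital ring homomorphism, the spectrum of $A_{e,k}$ is $\sigma(A_k)\cup\sigma(A_k)^*$, so its spectral abscissa coincides with that of $A_k$ and the two conditions are equivalent.

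For \ref{it:Akreal}$\Leftrightarrow$\ref{it:lyapReal}, the direction $(\Leftarrow)$ is the classical Lyapunov theorem applied to $A_{e,k}+\mu I$. The delicate direction is $(\Rightarrow)$: one must show the certificate can be chosen with the special block structure of $P_{e,k}$. The key observation is the symmetry $J^\top A_{e,k} J = A_{e,k}$, where $J:=\smat{0 & I_n\\ -I_n & 0}$ encodes multiplication by $\jmath$ under realification. Starting from any real $Q\Gst 0$ satisfying $\He{QA_{e,k}}\Lst -2\mu Q$, the averaged matrix $\tilde Q := \tfrac{1}{2}(Q + J^\top Q J)$ is still symmetric positive definite, still satisfies the Lyapunov inequality (both the inequality and $-2\mu Q$ are invariant under $Q\mapsto J^\top Q J$), and additionally commutes with $J$. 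A direct block computation shows that $\tilde Q$ commuting with $J$ forces precisely the structure $\tilde Q = \smat{P_k & -\Pi_k\\ \Pi_k & P_k}$ with $P_k=P_k^\top\Gst 0$ and $\Pi_k=-\Pi_k^\top$, yielding \eqref{eq:Lyap_e}. I expect this averaging/structure-preservation step to be the main subtlety; the rest is bookkeeping with Schur decomposition and realification.
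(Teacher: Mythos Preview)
Your proposal is correct. The paper itself does not supply a proof of Proposition~\ref{th:synch}: it simply states that the conditions are adapted from \cite[Theorem~1]{interconnected2023}, with the output matrix $C$ there replaced by the feedback gain $K$. So there is nothing to compare against beyond the citation, and what you have written is a sound self-contained reconstruction of that result: the Schur triangularization of $L$ to isolate the transverse dynamics and reduce $\mu$--UGES of $\A$ to the spectral abscissa bound on the $A_k$; the realification to pass between $A_k$ and $A_{e,k}$; and the $J$-averaging trick to force the block structure $\smat{P_k & -\Pi_k\\ \Pi_k & P_k}$ on the Lyapunov certificate. All three steps are standard and your identification of the averaging step as the only nontrivial one is accurate.

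One cosmetic remark: with the embedding $a+\jmath b\mapsto\smat{a & b\\ -b & a}$ that you write down, the realification of $A_k$ comes out as $\smat{A-\alpha_k BK & -\beta_k BK\\ \beta_k BK & A-\alpha_k BK}$, which is $A_{e,k}$ with the off-diagonal signs flipped (equivalently, the realification of $A_k^*$). The convention that lands exactly on \eqref{eq:Aekdef} is $a+\jmath b\mapsto\smat{a & -b\\ b & a}$, i.e.\ $J=\smat{0 & -I_n\\ I_n & 0}$. This does not affect any conclusion, since the two block matrices are similar via $\operatorname{diag}(I_n,-I_n)$, the spectra coincide, and your $J$ still satisfies $J^\top A_{e,k}J=A_{e,k}$; but you may want to align the sign convention when writing it up.
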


\section{Feedback Design}
   {We aim at designing a common state-feedback matrix  $K$ so as to ensure}
   synchronization to $\A$, {i.e., so as to satisfy} the conditions in Proposition~\ref{th:synch}.
   {We choose} an LMI{-based approach to design} $K$, {which} allows us to easily {embed additional} linear constraints in the design process.
   {Relevant linear constraints may be related, e.g., to the}
   $H_\infty$ gain, saturation {handling}, gain norm, and convergence rate \cite{BoydBookLMI1994}.

   \subsection{Revisited synchronization conditions}
      We can distinguish two main cases: either the Laplacian eigenvalues are all real or {at least one of them is} complex.
      In the {former case},
      conditions \eqref{eq:Akdef} can be framed within an LMI formulation, 
      through a procedure similar to the one we describe next.
      In the latter case, we refer to expression \eqref{eq:Lyap_e}, where the problem is lifted to a higher space, {considering $A_{e,k}$ instead of $A_k$,} so as to {work with real-}valued matrices.
      We focus on the latter case, which is more general and includes the former. 
      Let us define the inverse of $P_{e,k}$ {in \eqref{eq:Lyap_e}} 
      $$ Q_{e,k} := P_{e,k}^{-1} = Q_{e,k}^\top = \smat{Q_k & \Sigma_k \\ -\Sigma_k & Q_k},\quad k=1,\dots,\nu,$$ 
      with $Q_k$ symmetric positive definite and $\Sigma_k$ skew-symmetric. 
      {In fact, } $Q_k^{-1}=P_k-\Pi_k^\top P_k^{-1} \Pi_k$ 
      ({which is invertible, since applying} the Schur complement to $P_{e,k}$ {yields} $Q_k^{-1}\Gst0$)
      and $\Sigma_k=P_k^{-1}\Pi_k Q_k = Q_k\Pi_k P_k^{-1}$
      ({where the equality holds because}
         $
         \Pi_k P_k^{-1}Q_k^{-1} = \Pi_k - \Pi_k P_k^{-1} \Pi_k^\top P_k^{-1}\Pi_k 
         =  (P_k - \Pi_k P_k^{-1} \Pi_k^\top)P_k^{-1} \Pi_k 
         =  Q_k^{-1}P_k^{-1} \Pi_k
         $  ).
      Then, we can left- and right- multiply inequality \eqref{eq:Lyap_e} by $Q_{e,k}$, obtaining
      \begin{align} \label{eq:Lyap_Qe}
         \He{A_{e,k}Q_{e,k}}  \Lst -2\mu Q_{e,k}.
      \end{align}

      {To look for a common state-feedback matrix $K$, even when the matrices $Q_{e,k}$ are different,}
      we take advantage of the results in \cite{pipeleers2009}.
      We can rewrite \eqref{eq:Lyap_Qe} as
      \begin{equation}\label{eq:Lyap_pipeleers}
         \bmat{I_{2n} & A_{e,k}^\top} (\Phi_{\mu}\otimes Q_{e,k}) \bmat{I_{2n} \\ A_{e,k}} \Lst 0,
      \end{equation}
      where $\Phi_\mu=\smat{2\mu & 1 \\ 1 & 0}$ describes the stability region, {which} in our case is the complex {half-}plane with real part smaller than $-\mu$.
      Then, according to \cite[Section~3.1]{pipeleers2009}, \eqref{eq:Lyap_pipeleers} is equivalent to {the existence of matrices $X_{1,k},X_{2,k}\in\myreal^{2n\times 2n}$ satisfying}
      \begin{align}\label{eq:LMI_pipeleers}
         &(\Phi_\mu\otimes Q_{e,k}) +\He{ \bmat{A_{e,k}\\ -I_{2n} } \bmat{{X_{1,k}} & {X_{2,k}}} } \Lst 0,
      \end{align}
      where $X_{1,k}$ and $X_{2,k}$ are multipliers that add degrees of freedom by {de}coupling matrices $Q_{e,k}$ and $A_{e,k}$.
      {Conditions \eqref{eq:LMI_pipeleers}} {are} {still} \emph{necessary and sufficient} for $\mu$--synchronization,
      because they are equivalent to \eqref{eq:Lyap_pipeleers}.

      According to the derivation in \cite[Section~3.3]{pipeleers2009}, imposing ${X_{2,k}}=\alpha{X_{1,k}}$, with $\alpha>0$, does not add conservatism as far as {there are} $\nu$ independent matrices ${X_{1,k}}$.
      We are {now} going to relax {this condition} {by assuming that matrices $X_{1,k}$ and $X_{2,k}$ have the {specific} structure}
      $$ X_{1,k}:={X_e}Z_k, \qquad X_{2,k}:={X_e}W_k,$$ 
      where $X_e:=\smat{X&0\\0&X}=I_2\otimes X$, with $X\in\myreal^{n\times n}$, is a block-diagonal matrix common to all $\nu$ inequalities,
      {while} $Z_k,W_k\in\myreal^{2n\times 2n}$ are different multipliers for every inequality. 
      This assumption {introduces} conservativeness; therefore, the conditions are now only sufficient.
      {However,} we can now expand \eqref{eq:LMI_pipeleers} {and} obtain {the} bilinear formulation
      \begin{equation}\label{eq:LMI_Xe}
         \bmat{2\mu{Q_{e,k}} & {Q_{e,k}} \\ {Q_{e,k}} & 0} + \He{ \bmat{
            {\Theta_{k}} {Z_k }&  {\Theta_{k}} {W_k}  \\ 
                  -{X_e}{Z_k} &  -{X_e}{W_k} }  }  \Lst 0,
      \end{equation}
      with ${\Theta_{k}}=\bigl(I_2\otimes AX\bigr)- \bigl(\Lambda_k\otimes B{Y}\bigr)$, where $\Lambda_k=\smat{\alpha_k & -\beta_k \\ \beta_k & \alpha_k}$ is related to the $k$-th eigenvalue $\lambda_k=\alpha_k+\jmath\beta_k$ and $Y:=KX$ is a {suitable} change of variables.
      An expanded version of \eqref{eq:LMI_Xe} with the variables highlighted is provided in 
      equation \eqref{eq:LMI_full_tab} at the top of the next page.

      Constraints \eqref{eq:LMI_Xe} alone, might lead to badly conditioned optimized selections of $K$, due to the fact that the joint spectral abscissa of $A_{e,k}$ for all $k=1,\ldots, \nu$ may potentially grow unbounded for arbitrarily large values of $K$. Thus, as a possible sample formulation of a multi-objective optimization, we fix a maximum desired norm $\bar{\kappa}$ for $K$ and enforce the constraint $\|K\|\leq{\bar\kappa}$ through the following LMI formulation:
      \begin{equation}\label{eq:LMI_kappa}
         \bmat{X+X^\top-I & Y^\top \\ Y & {{\bar\kappa}}^2 I}\Gst0,
      \end{equation}
      stemming from the expression $K^\top K \Leq{{\bar\kappa}}^2 I$ after applying a Schur complement and exploiting $(X-I)(X^\top-I)\Geq0$.
    
      We then suggest to synthesize a state-feedback matrix $K$ satisfying $\|K\|\leq{\bar\kappa}$ and maximizing
      $\mu$ by
      {solving} the {bilinear} optimization problem
      \begin{equation}\label{eq:opt}
         \begin{aligned}
            \mathop{\max}_{\substack{
               X,Y\\
               Z_1,\dots,Z_\nu\\
               W_1,\dots,W_\nu\\
               Q_{e,1},\dots,Q_{e,\nu}}}
               \quad  \mu, \quad &\text{subject to:}\\[-25pt]
            \quad& \mbox{\eqref{eq:LMI_kappa}},\quad Q_{e,k}\Gst0, \\[5pt]
            & \mbox{BMI } \eqref{eq:LMI_Xe}, \mbox{ for all } k = 1,\dots,\nu,
         \end{aligned}
      \end{equation}
      where alternative performance-oriented linear constraints can be straightforwardly included, and then selecting $K=YX^{-1}$.
      An iterative approach can be used to make the problem quasi-convex and solve it iteratively with standard LMI techniques. 

      \begin{table*}
         \centering
         \begin{equation}\label{eq:LMI_full_tab}
            \bmat{ 2\param{\mu}\bmat{\param{Q_k} & -\param{\Sigma_k} \\ \param{\Sigma_k} & \param{Q_k}} & 
                            \bmat{\param{Q_k} & -\param{\Sigma_k} \\ \param{\Sigma_k} & \param{Q_k}}   \\ 
            \phantom{2\mu}  \bmat{\param{Q_k} & -\param{\Sigma_k} \\ \param{\Sigma_k} & \param{Q_k}} & 0} 
            +\He{  \bmat{ \bigl( (I_2\otimes A\paramA{X}) - \left(\Lambda_k\otimes B\paramA{Y}\right) \bigr) \paramB{Z_k}\phantom{\bmat{~\\~}} & 
                          \bigl( (I_2\otimes A\paramA{X}) - \left(\Lambda_k\otimes B\paramA{Y}\right) \bigr) \paramB{W_k}  \\ 
               -\bmat{\paramA{X}& 0 \\ 0 &\paramA{X}}\paramB{Z_k} & - \bmat{\paramA{X}&0\\ 0&\paramA{X}}\paramB{W_k} } } \Lst 0.
         \end{equation}
      \end{table*}

      \RestyleAlgo{ruled}
      \SetKwComment{Comment}{/* }{ */}
      \begin{algorithm}[tb]
         \caption{Iterative design of control matrix $K$}\label{alg:X_Zk}
         \KwData{$A_{e,k}, k=1, \ldots, \nu$, tolerance}
         \KwResult{$K$}
         $Z_k \gets I_{2n}$\;
         $W_k \gets \alpha I_{2n}$\;
         \While{$|\mu_{prev}-\mu_S|>\mbox{tolerance}$}{
            $\mu_{prev}=\mu_S$\;
            \texttt{-- synthesis step in Sec.~\ref{sec:iterative} --}\\
            $(\param{\mu_S},\paramA{X},\paramA{Y},\param{Q_{e,k}})=$ solve \eqref{eq:opt} given $(Z_k,W_k)$\;
            \texttt{-- analysis step in Sec.~\ref{sec:iterative} --}\\
            $(\param{\mu_A},\paramB{Z_k},\paramB{W_k},\param{Q_{e,k}})=$ solve \eqref{eq:opt} given $(X,Y)$\;

         }
         $K=YX^{-1}$\;
      \end{algorithm}

      \begin{remark}
         The most natural way to include the coefficient $\mu$ in the equations is that inspired by the techniques in \cite{pipeleers2009}, leading to the formulation in \eqref{eq:LMI_Xe}, where $\mu$ defines the stability region in the complex plane and the problem results in a generalized eigenvalue problem (GEVP).
         As an alternative, $\mu$ could be introduced as a destabilizing effect acting on the matrices $A_{e,k}$ (shifting their eigenvalues to the right in the complex plane), which are still required to be Hurwitz: 
         \begin{equation*}\label{eq:mu_inside}
            \bmat{0 & {Q_{e,k}} \\ {Q_{e,k}} & 0} +\He{ \bmat{A_{e,k}+2\mu I_{2n} \\ -I_{2n} } \bmat{{X_{1,k}} & {X_{2,k}}} } \Lst 0.
         \end{equation*}
         However, with this formulation, the problem is no longer a GEVP. 
         This complicates the implementation, since the feasibility domain with respect to $\mu$ could be bounded (while in a GEVP it is right or left unbounded)
         {and bisection is not appropriate.}
         We tested this alternative approach in simulation and we obtained similar results to those achieved with Algorithm~\ref{alg:X_Zk}, with the advantage of {typically} reaching convergence after a \emph{significantly lower} number of iterations.
      \end{remark}

   \subsection{Iterative algorithm}
\label{sec:iterative}

      In order to solve the BMI optimization problem \eqref{eq:opt} with convex techniques,
      we focus our attention on BMI \eqref{eq:LMI_Xe}, {since} the other constraints are linear, {and we propose}
      an iterative approach {for the problem solution}, described in Algorithm~\ref{alg:X_Zk}.
      The algorithm is composed of two steps:\\
      1)
      \textit{\textbf{Synthesis step}}: for {given fixed} multipliers $Z_k$ and $W_k$, $k=1,\ldots, \nu$, optimization \eqref{eq:opt} is solved in the decision variables $(\mu, X, Y,Q_{e,k})$, which corresponds to a generalized eigenvalue problem (easily solvable by a bisection algorithm) due to the fact that matrices $Q_{e,k}$ are all positive definite; \\
      2) 
      \textit{\textbf{Analysis step}}: for {given fixed} matrices $X$ and $Y$,
      optimization \eqref{eq:opt} is solved in the decision variables $(\mu, Z_k, W_k,Q_{e,k})$, which corresponds again to a generalized eigenvalue problem due to positive definiteness of $Q_{e,k}$.

      Algorithm~\ref{alg:X_Zk} essentially comprises iterations of the two steps above, until parameter $\mu$ increases less than a specified tolerance over two steps. 
      To the end of establishing a useful means of comparison in the simulations reported in Section~\ref{sec:simulations}, 
      we naively initialize the algorithm by fixing the initial multipliers as scaled identity matrices (with $\alpha>0$ properly tuned).
      More generally, we emphasize that using the Riccati construction of \cite{IsidoriBook2017_ch5}, stabilizability of $(A,B)$ is sufficient for ensuring the existence of a Riccati-based solution of \eqref{eq:opt} and it is immediate to design an infinite gain margin solution where all the matrices $A_{e,k}$ share a common quadratic Lyapunov function. 
      We do not pursue this initialization here, so as to perform a fair comparison between our algorithm 
      (initialized in a somewhat naive way) 
      and the construction resulting from \cite{IsidoriBook2017_ch5}.

      The following proposition establishes useful properties of the algorithm.

      \begin{proposition}\label{prop:alg}
         For any selection of the tolerance, if the initial condition is feasible, then
         all the iterations of Algorithm~\ref{alg:X_Zk} are feasible. Moreover, 
         $\mu$ never decreases across two successive iterations. Finally, the algorithm terminates in a finite number of steps.
      \end{proposition}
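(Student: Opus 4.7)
The plan is to exploit the fact that both the synthesis and the analysis sub-problems are restrictions of the very same optimization \eqref{eq:opt}, obtained by freezing two disjoint subsets of its decision variables, so that a feasible solution of one sub-problem is automatically feasible for the next one with the same $\mu$.

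First, I would formalize the feasibility propagation. Suppose that at iteration $j$ the synthesis step has produced a feasible tuple $(\mu_S^{(j)},X^{(j)},Y^{(j)},Q_{e,k}^{(j)})$ with the previously frozen multipliers $(Z_k^{(j-1)},W_k^{(j-1)})$. Because \eqref{eq:LMI_kappa}, the positivity constraints $Q_{e,k}\Gst 0$ and the BMI \eqref{eq:LMI_Xe} depend on all variables jointly but are satisfied at this tuple, the same tuple is a feasible point of the analysis sub-problem (where $X,Y$ are now frozen). Hence the analysis step is feasible and its optimal value satisfies $\mu_A^{(j)}\geq\mu_S^{(j)}$. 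By the symmetric argument, the subsequent synthesis step, which freezes the freshly computed $(Z_k^{(j)},W_k^{(j)})$, inherits the feasible tuple from the preceding analysis step and yields $\mu_S^{(j+1)}\geq\mu_A^{(j)}$. An induction starting from the feasible initialization then proves both that every iteration is feasible and that $\mu$ is non-decreasing across consecutive steps.

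For termination, I would argue that the sequence $\{\mu_S^{(j)}\}$ is bounded from above. The constraint \eqref{eq:LMI_kappa} forces $\|K\|=\|YX^{-1}\|\leq\bar\kappa$ through the Schur-complement derivation recalled right after \eqref{eq:LMI_kappa}, so the matrices $A_{e,k}$ in \eqref{eq:Aekdef} lie in a compact set determined only by $A$, $B$, $\bar\kappa$ and the Laplacian eigenvalues. Consequently their spectral abscissae, and thus the admissible $\mu$ in Proposition~\ref{th:synch}, are uniformly bounded. A monotone bounded sequence converges, so successive increments $\mu_S^{(j+1)}-\mu_S^{(j)}$ tend to zero; for any prescribed positive tolerance the stopping criterion in Algorithm~\ref{alg:X_Zk} is therefore triggered after finitely many iterations.

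The only subtle point, and the one I would be most careful about, is the logical ordering of the feasibility/monotonicity argument: one must verify that the quadruple returned by the analysis step genuinely extends to a feasible point of the next synthesis sub-problem using the newly fixed $(Z_k,W_k)$, and that no auxiliary constraint (in particular \eqref{eq:LMI_kappa}, which only involves $X$ and $Y$) is violated when the roles are swapped. Since \eqref{eq:LMI_kappa} does not depend on $(Z_k,W_k,Q_{e,k})$ and the BMI \eqref{eq:LMI_Xe} is literally the same inequality in both steps, this verification is routine, which is why I do not expect any real obstacle beyond a careful bookkeeping of which variables are frozen in each phase.
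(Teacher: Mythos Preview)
Your argument is correct and follows essentially the same route as the paper: feasibility propagates because the frozen variables at each step are inherited from the previous optimal solution, monotonicity of $\mu$ then follows immediately, and termination is obtained from the upper bound on $\mu$ induced by the norm constraint on $K$ together with convergence of a monotone bounded sequence. Your treatment is slightly more detailed in the bookkeeping (in particular the observation that \eqref{eq:LMI_kappa} involves only $X,Y$), but the structure and key ideas coincide with the paper's proof.
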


      \begin{proof}
      About recursive feasibility, note that once the first step is feasible, for any pair of successive steps, the optimal solution to the previous step is structurally a feasible solution to the subsequent step. Indeed, the variables that are frozen at each iteration are selected by using their optimal values from the previous step. Since the cost maximized at each step is always $\mu$, then $\mu$ can never decrease across subsequent steps and then recursive feasibility is guaranteed.

      About the algorithm terminating in a finite number of steps, note that the optimal value of $\mu$ is necessarily upper bounded by a maximum achievable $\overline \mu$ depending on the 
      norm of matrices $A$, $B$, on the eigenvalue of $L$ having minimum norm, and on the bound $\bar \kappa$ imposed on the norm of the state-feedback matrix $K$. 
      Since $\mu$ monotonically increases across iterations and it is upper bounded, then it must converge to a value $\mu^\star$ and eventually reach any tolerance limit across pairs of consecutive iterations.
      \end{proof}

      \begin{remark}
         Computationally speaking, each {iteration} of Algorithm~\ref{alg:X_Zk} {amounts to solving} a GEVP {because} $\mu$ is multiplying a sign definite matrix $Q_{e,k}\Gst0$, 
         {and hence} the conditions are monotonic with respect to $\mu$.
         {Therefore, we can} find the optimal $\mu$ {via} a bisection algorithm: 
         if the problem is feasible for $\mu=\mu^\star$, then the problem is feasible {for} all $\mu\leq\mu^\star$; 
         on the other hand, if the problem is infeasible {for} $\mu=\mu^\star$, then the problem is infeasible for all $\mu\geq\mu^\star$. Our objective is to find the maximum $\mu$ 
         for which the problem is feasible ({so that} no other {larger} $\mu$ leads to feasibility).
      \end{remark}

\section{Comparison and Simulations}
\label{sec:simulations}

   To test the effectiveness of Algorithm~\ref{alg:X_Zk}, we compare it with other {design} procedures that solve the simultaneous stabilization problem.
   The benchmark problem is the maximization of the rate $\mu$
   with the norm of $K$ upper bounded by ${\bar\kappa}$, as induced by constraint \eqref{eq:LMI_kappa}.

   \begin{figure}[bt]
      \centering
      \includegraphics[width=0.9\columnwidth]{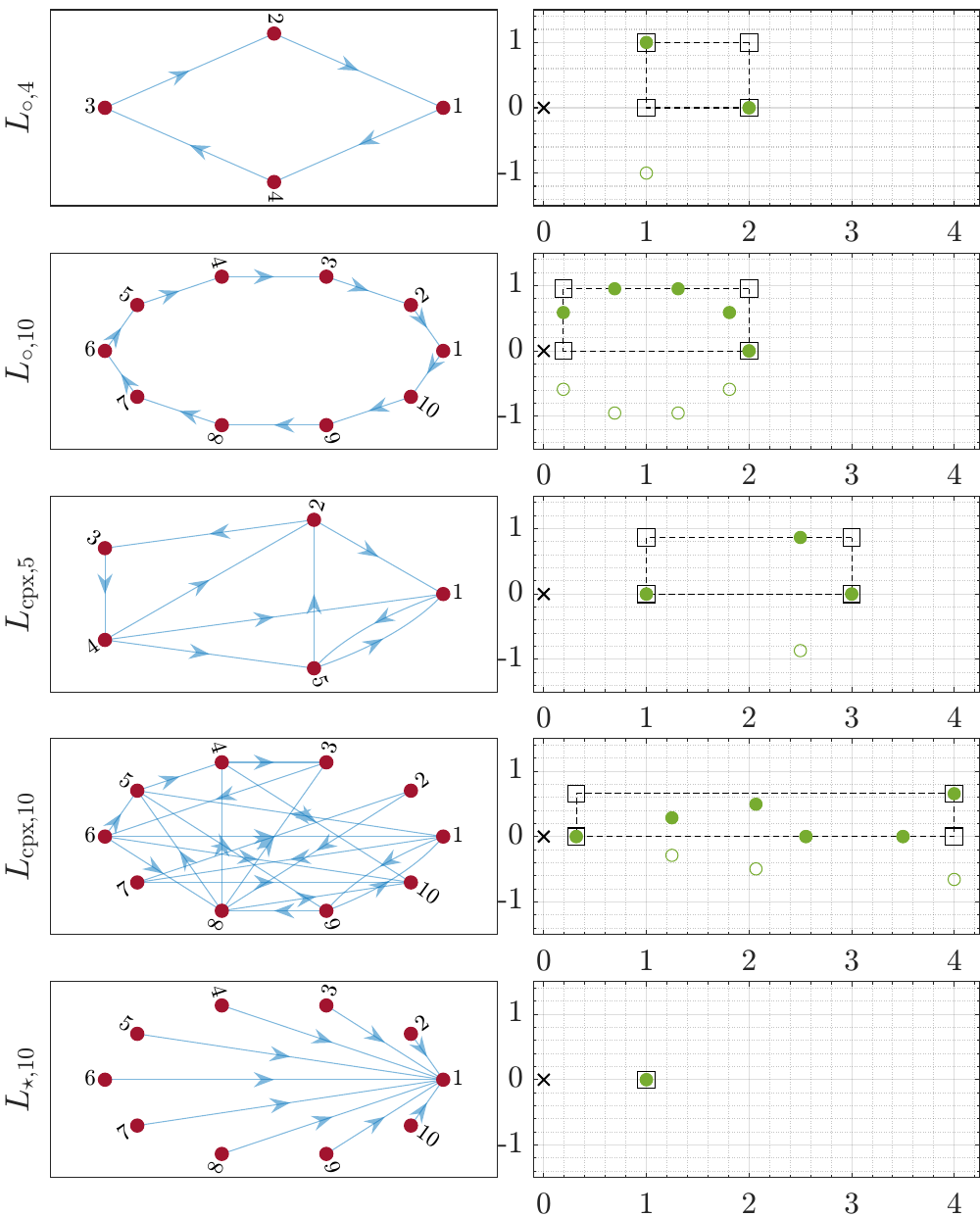}
      \caption{Left: topologies of the considered graphs. 
      Right: eigenvalues of the Laplacian matrix; the black cross denotes $\lambda_0=0$, green full dots denote all the other eigenvalues.
      The values considered in method ``b'' are visualized as squares and the relative set is delimited by a dashed line.}
      \label{fig:graph}
   \end{figure}

   \CatchFileDef{\tabLcpxc }{./data/tab_L_cpx5_row_e3_rev}{}
   \CatchFileDef{\tabLcpxd }{./data/tab_L_cpx10_row_e3_rev}{}
   \CatchFileDef{\tabLcircq}{./data/tab_L_circ4_row_e3_rev}{}
   \CatchFileDef{\tabLcircd}{./data/tab_L_circ10_row_e3_rev}{}
   \CatchFileDef{\tabLstar }{./data/tab_L_star10_row_e3_rev}{}

   \begin{table*}
      \caption{Simulation results for dynamics $A_{\text{X-29}}$ and $A_{osc}$ with different network topologies.}
      \label{tab:results}
      \centering
      \footnotesize{
      \begin{tabular}{ cc|cccccc|cccccc }
         & & \multicolumn{6}{c|}{$A_{\text{X-29}}$} & \multicolumn{6}{c}{$A_{\text{osc}}$} \\
         \midrule
                & & ``a''   & ``b''    & ``c''     & ``d''  & ``e''               &  \#iter               & ``a''     & ``b''     & ``c''     & ``d''  & ``e''               &  \#iter \\  
         graph  & & Riccati & Listmann & $A_{e,k}$ & Direct & Alg.~\ref{alg:X_Zk} &  Alg.~\ref{alg:X_Zk}  & Riccati   & Listmann  & $A_{e,k}$ & Direct & Alg.~\ref{alg:X_Zk} &  Alg.~\ref{alg:X_Zk}  \\ 
         \midrule
         \tabLcircq  \cmidrule{1-14}
         \tabLcircd  \cmidrule{1-14}
         \tabLcpxc   \cmidrule{1-14}
         \tabLcpxd   \cmidrule{1-14}
         \tabLstar  

         \bottomrule
        \end{tabular}
      }
   \end{table*}

   \subsection{Dynamical system and network}
      In our simulations we consider two types of agent dynamics:
      {one is} oscillatory,
      \begin{equation*}
         (A_{\text{osc}},B_{\text{osc}}) = \left( \bmat{0&-1\\1&0} , \bmat{0\\1} \right);
      \end{equation*}
      {while the second one is}
      the unstable lateral dynamics of a forward-swept wing, the Grumman X-29A, as in \cite{listmann2015},
      \begin{multline*}
         (A_{\text{X-29}},B_{\text{X-29}}) = \\  \left( 
            \smat{
               -2.059  &  0.997   & -16.55  & 0       \\
               -0.1023 & -0.0679  &  6.779  & 0       \\
               -0.0603 & -0.9928  & -0.1645 & 0.04413 \\
                1      &  0.07168 &  0      & 0.0       }
         ,  \smat{
               1.347      &  0.2365     \\
               0.09194    & -0.07056    \\
               -0.0006141 &   0.0006866 \\
               0          &  0          } 
               \right)
      \end{multline*}

      We consider five communication networks: 
      two circular graphs with $N=4$ and $N=10$ agents, 
      two generic directed graphs with $N=5$ and $N=10$ agents characterized by complex eigenvalues, 
      and a star graph with $N=10$ agents.
      The graph topologies and the eigenvalues of the associated Laplacian matrices are visualized in Fig.~\ref{fig:graph}.

   \subsection{Compared approaches}
      The approaches that we compare are {the following ones}:
      \begin{enumerate}[a.]
         \item {}{[\emph{Riccati}]}, the dual case of \cite[Section~5.5]{IsidoriBook2017_ch5}:
         the gain {is structured as} $K=B^\top P$, where $P$ is the unique solution to the algebraic Riccati equation
         \begin{equation}\label{eq:Riccati}
            A^\top P+P A-2b PBB^\top P+aI=0,
         \end{equation}
         with $b\leq\Re(\lambda_k)$ and $a>0$. 
         We solve \eqref{eq:Riccati} by fixing $b=\min(\Re(\lambda_k))$ and adjusting the value of $a$ so as to respect the bound $\|K\|\leq{\bar\kappa}$, which is easily done due to the monotonicity of $\|K\|$ with respect to $a$.
         \item {}{[\emph{Listmann}]} from \cite{listmann2015}: LMI conditions \eqref{eq:Lyap_Qe} with $Q_{e,k}=\smat{Q&0\\0&Q}$ and $Y=KQ$ are imposed for the $\lambda_k$ corresponding to the corners of a rectangular box in the complex plane that includes all non-zero eigenvalues of $L$ (considering the eigenvalues in the first quadrant is enough, since conjugate eigenvalues lead to the same condition), while incorporating in the LMI-based design the maximum norm condition \eqref{eq:LMI_kappa}. A fixed number of LMIs need to be solved regardless {of} the network size.
         \item {}{[$A_{e,k}$]}: the method resembles that in ``b'', but now conditions \eqref{eq:Lyap_Qe} are imposed for {each} $\lambda_k$, $k=1,\dots,\nu$.
         \item {}{[\emph{Direct}]}: one iteration of Algorithm~\ref{alg:X_Zk} is executed, which amounts to {solving} \eqref{eq:opt} with $Z_k=I_{2n}$, $W_k=\alpha I_{2n}$ and $\alpha>0$ properly tuned.
         Notably, matrices $Q_{e,k}$ do not have a pre-defined structure.
         \item {}{[\emph{Algorithm~\ref{alg:X_Zk}}]}: the procedure in Algorithm~\ref{alg:X_Zk} is executed up to its termination, as guaranteed by Proposition~\ref{prop:alg}.
      \end{enumerate}
      
      In the simulations, the convergence rate {of the solutions} is estimated from the spectral abscissa of matrices $A_k$ {in \eqref{eq:Akdef}},
      namely, from the {largest-real-part} eigenvalue:
      $$ \hat\mu=-\max\left(\Re\bigl(\text{eig}\left( A_k \right)\bigr)\right). $$

   \subsection{Results}
      We implement the different procedures in MATLAB, {using} the toolbox YALMIP \cite{YALMIP} {with MOSEK as an LMI solver}.
      For the algorithm, we consider a tolerance of $10^{-3}$
      and ${\bar\kappa}=20$ as the bound on the norm of $K$.

      For the different combinations of dynamics and graph topologies,
      Table~\ref{tab:results} reports a summary of all our results, along with
      estimated converge rate $\hat\mu$, norm of $K$ and execution time.
      The time evolutions of the distances from the synchronization set $\A$ are shown in Figs.~\ref{fig:Ax29} and~\ref{fig:Aosc} for the approaches ``a'', ``b'' and ``e''.

      Generally, method ``a'' {has a worse performance than} {the} considered LMI-based {methods}. 
      The gain bound is reached, but the convergence rate is the slowest, most likely because the approach is forcing
      {an infinite gain margin} for $K$. 
      Locating the eigenvalues of the $A_k$ matrices {in the complex plane} shows that {method ``a'' tends} to move to the left {a} few eigenvalues (faster modes) and penalize others, so that the convergence {speed} is limited.

      Method ``b'' performs similarly to ``c'', {as} expected, since the two methods simply consider different (eigen)values.
      {In general,} method ``b'' is more conservative than ``c'', but is faster in larger networks.
      With $L_{cpx,5}$ and $L_{cpx,10}$, method ``b'' is slightly more conservative{, as} is reasonable, since it {considers} values that are not in the spectrum of the Laplacian.

      {Method ``d''} generally {yields} better results than ``b'' and ``c'',
      {provided that} proper values of the parameter $\alpha$ {are chosen}.
      This improvement is due to the decoupling between matrices $A_{e,k}$ and $Q_{e,k}$.
      
      The best results are obtained using {our proposed} Algorithm~\ref{alg:X_Zk},
      {which gives the highest} convergence rate and {gets close to} the system specifications.
      {However}, the computational load is higher, since several iterations are needed.
      With dynamics $A_{\text{X-29}}$, the states are always converging faster; with dynamics $A_{\text{osc}}$, the performance is similar to that obtained with other non-iterative LMI-based techniques.
      Algorithm~\ref{alg:X_Zk} outperforms the other procedures in the case with dynamics $A_{\text{X-29}}$ and graph $L_{\circ,10}$: even though it requires quite some iterations to converge, it provides a controller that leads to almost one-order-of-magnitude faster convergence than the others, as shown in Fig.~\ref{fig:Ax29}.

      \begin{figure}[t]
         \centering
         \includegraphics[width=0.995\columnwidth]{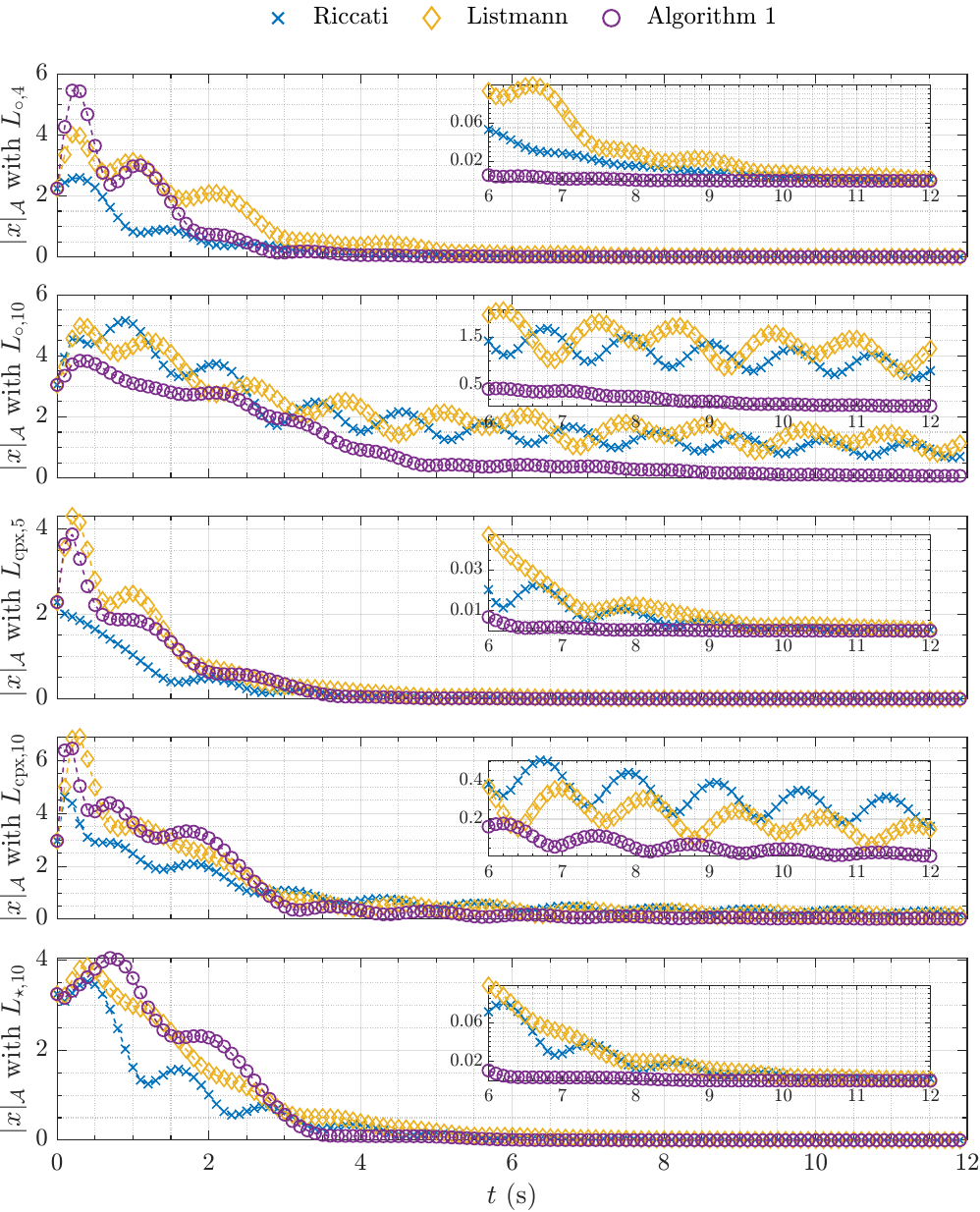}
         \caption{Time evolution of the distance of the states from the synchronization set $\A$ for agents with dynamics $A_{\text{X-29}}$. The methods ``a'', ``b'' and ``e'' are compared. The inset figures zoom into the second-half time. }
         \label{fig:Ax29}
      \end{figure}
      
      \begin{figure}[t]
         \centering
         \includegraphics[width=0.98\columnwidth]{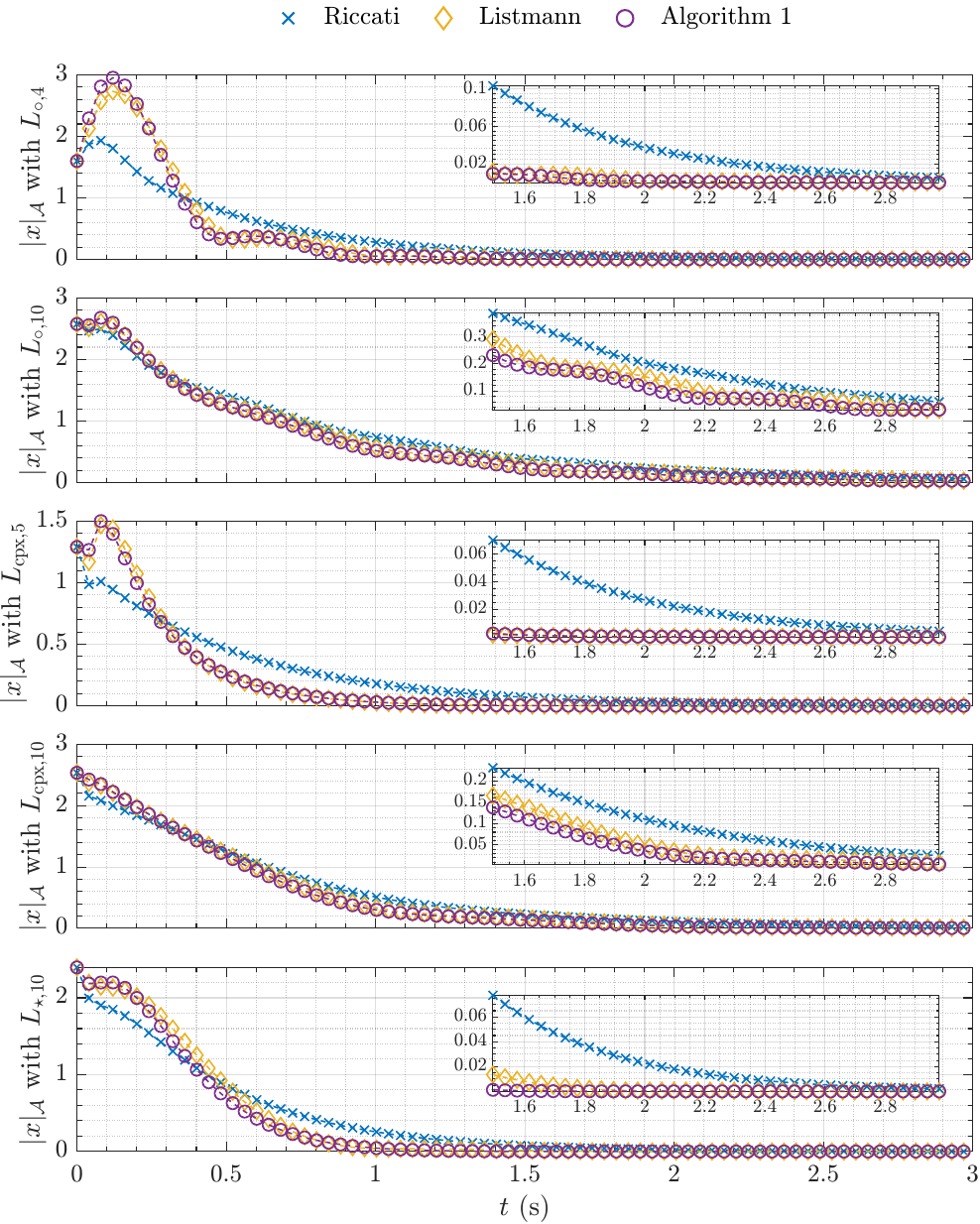}
         \caption{Time evolution of the distance  of the states from the synchronization set $\A$ for agents with dynamics $A_{\text{osc}}$. The methods ``a'', ``b'' and ``e'' are compared. The inset figures zoom into the second-half time. }
         \label{fig:Aosc}
      \end{figure}

\section{Conclusions}
   We {focused} on the synchronization of identical linear systems in the case of full-state feedback.
   First we provided some necessary and sufficient conditions for the synchronization. 
   Then, we relaxed them in order to have a new formulation that {can be iteratively solved} through LMIs. 
   This new procedure to solve the simultaneous stabilization problem, although requiring relatively large computational times, turns out to give better results in our benchmark problem where the convergence rate is maximized under given constraints on the performance.

   Our results pave the way for further developments, such as the use of alternative methods (e.g., convex-concave decomposition) to deal with BMIs 
   and the extension to the case of static output feedback control laws.

\textbf{Acknowledgment:}
   The authors would like to thank Domenico Fiore for his work in the initial stages of this research activity.

\bibliographystyle{IEEEtran}
\bibliography{2023NicolaCDC_LMI.bib}  

\end{document}